\documentclass{article}
\usepackage{arXiv}
\usepackage[utf8]{inputenc} 
\usepackage[T1]{fontenc}    
\usepackage{amsmath,amssymb,amsfonts,amsthm} 
\usepackage{newtxtext,newtxmath} 
\usepackage{mathrsfs} 
\usepackage{nicefrac}
\usepackage{bm} 
\usepackage{url}            
\usepackage{booktabs}       
\usepackage{graphicx} 
\usepackage{fancyhdr} 
\usepackage{float}
\usepackage{enumitem} 
\usepackage{subcaption} 
\usepackage{cite}
\newcommand{\upcite}[1]{\textsuperscript{\textsuperscript{\cite{#1}}}}
\usepackage{csquotes} 
\usepackage{lipsum} 
\usepackage{microtype}      
\usepackage{titlesec} 
\usepackage{geometry} 
\usepackage{hyperref} 

\newtheoremstyle{mystyle}{6pt}{6pt}{\itshape}{}{\bfseries}{.}{.5em}{}
\theoremstyle{mystyle}
\numberwithin{figure}{section}
\numberwithin{table}{section}
\numberwithin{equation}{section}
\newtheorem{theorem}{Theorem}[section]
\newtheorem{definition}[theorem]{Definition}
\newtheorem{lemmar}[theorem]{Lemma}
\newtheorem{corollary}[theorem]{Corollary}
\newtheorem{example}[theorem]{Example}

\newtheorem{remark}[theorem]{Remark}
\graphicspath{ {./images/} }
\geometry{a4paper,margin=1in}

\title{Reducibility of Cartesian product quantum graph equipped with group action}
\author{
	Shimei Li \\
	School of Science\\
	Hebei University of Technology\\
	Tianjin 300401, China \\
	\texttt{202321102003@stu.hebut.edu.cn} \\
	\And
	Kai Zhang \\
	School of Science\\
	Hebei University of Technology\\
	Tianjin 300401, China \\
	\texttt{yizhipangkaiya@163.com} \\
	\And
	Jia Zhao \\
	School of Science\\
	Hebei University of Technology\\
	Tianjin 300401, China \\
	\texttt{zhaojia@hebut.edu.cn}\\
}
\begin{document}
	\maketitle
	\begin{abstract}
		We consider a Cartesian product quantum graph $\Gamma_{n_1}\Box\Gamma_{n_2}$ with standard vertex conditions, and complete the decomposition of Hilbert space $L^2(\Gamma_{n_1}\Box\Gamma_{n_2})$ and the Laplacian $\mathscr{H}$ on it by employing the relevant theories of group representation. The concept of $\Gamma_{n_1}\Box\Gamma_{n_2}$ equipped with the action of the cyclic group $G_{n_1}\times G_{n_2}$ is defined through the introduction of periodic quantum graph and cyclic groups. We also constructed its quotient graph and accomplish the decomposition of its secular determinant. Furthermore, under the condition that  $\gcd(n_1,n_2)=1$, it can be regarded as equivalent to a circulant graph $C_{n_1n_2}(n_1,n_2)$. This work also provides a new method for the construction of isospectral graphs. 
	\end{abstract}
	\section{Introduction}
	\par Quantum graph, as a simplified model, naturally finds applications in mathematics, physics, chemistry, and other engineering fields when one considers propagation of waves of various natures through a quasi-one-dimensional system. Such as the quantum wires\upcite{ref1}, carbon nano-structures\upcite{ref2}, photonic crystals\upcite{ref3} and others. The Cartesian product graph, by virtue of the specificity of its structure, can be used to describe the arrangement of atoms in certain crystals or study the evolution of quantum states as well as the implementation of quantum algorithms. The most common type is the periodic lattices. For example, Ondřej\upcite{ref4} investigated spectral gaps of the Hamiltonian on a periodic cuboidal (or generally hyperrectangular) lattice graphs with $\delta$ couplings in the vertices, established a connection between gap arrangements and continued fraction coefficients associated with lattice edge-length ratios, and thereby facilitated partial resolution of the inverse spectral problem. Shipman\upcite{ref5} constructs a class of non-symmetric periodic Schrödinger operator on a bilayer graph whose Fermi surface is reducible. The bilayer graph is formed by the Cartesian product of a $\mathbb{Z}^n$-periodic graph and an edge, and for AA-stacked bilayer graphene the Fermi surface is always reducible. Offering theoretical support to understand the electronic band structures and electron transport properties of materials like bilayer graphene deeply.
	\par The problem of decomposing function spaces on quantum graphs essentially aims to understand the mathematical properties and physical behaviors of quantum graphs from a structural perspective. The decomposition of these function spaces(e.g. $L^2(\Gamma)$, Sobolev space $H^s(\Gamma)$) helps us understand the core issues on quantum graphs, such as \enquote{state evolution}, \enquote{energy distribution}. Physically, functions in $L^2(\Gamma)$ correspond to the \enquote{wave functions} of quantum systems, edges of quantum graphs can model \enquote{quantum channels}, while vertices can simulate \enquote{scattering centers}. Decomposition of function spaces enables the quantification of transport behaviors of electrons or photons. 
	\par Carlson and others \upcite{ref6,ref7} studied the direct sum decomposition of the space of square integrable functions on regular metric trees; Jia Zhao\upcite{ref8} studied the spectrum of Schrödinger operators on regular metric trees that satisfy the $\delta$-conduction and the $\delta'$-condition at a point. By proving the large decomposition of the square-integrable function space, it was shown that the operator defined on the regular metric tree is unitarily equivalent to the operator on a line graph, and the necessary and sufficient condition for the spectrum of the Schrödinger operator on the graph to be purely discrete was obtained. 
	\par Group representation theory is a fundamental tool in various fields such as mathematics, statistics, and physics. Its applications in graph theory can be found in \cite{ref9}. Specifically, \cite{ref10} presents applications of group representation theory to calculating the eigenvalues of Cayley graphs. In the study of quantum graph, a key application is to use the symmetry of quantum graphs to facilitate the calculation of their spectra. Ben-Shach \upcite{ref11} and Parzanchevski \upcite{ref12} introduced the concept of quotient graphs for investigating isospectral quantum graphs. 
	\par The article is organized as follow. The basic knowledge of quantum graphs and group representation theory is briefly introduced in Section 2. The notion of Cartesian product quantum graph $\Gamma_{n_1}\Box\Gamma_{n_2}$ equipped with cyclic group $G_{n_1}\times G_{n_2}$ actions is defined in Section 3, through the introduction of periodic quantum graph and cyclic group. Main theorems are in Section 4: we completed the decomposition of the space $L^2(\Gamma_{n_1}\Box\Gamma_{n_2})$ and the Laplacian on this space by employing the relevant theories of group representation. And in Section 5, we also constructed its quotient graph and completed the decomposition of its secular determinant and give a spectial example that if the condition $\gcd(n_1,n_2)=1$ is satisfied, we can transformed the decomposition of the secular determinant of the Cartesian product quantum graph into the decomposition of the secular determinant of a circulant graph that is isomorphic to it.
	\section{Preliminaries}
	\label{sec:headings}
	\par It first presents the relevant definitions and theorems of quantum graphs\upcite{ref13} and group representation theory\upcite{ref14,ref15} involved in this article. 
	\subsection{Quantum graph}
	\par Let $\Gamma$ be a compact metric graph with a finite vertex set $V$ and edge set $E$, without loops and multiple edges. The numbers of vertices and edges in $\Gamma$ are denoted by $|V|$ and $|E|$ , respectively. In particular, a graph with $|V|=n$ vertices is denoted as $\Gamma_{n}$ and $V=\left\{v_i\right\}_{i=1}^n$. Each edge $e_j\in E,j=1,2,\dots,|E|$ in $\Gamma$ (the edge for adjacent $v_i,v_{\bar{i}}$ is denoted $v_i\sim v_{\bar{i}}$) is assigned a finite positive length $L_{e_j} \in(0, \infty]$.
	It means that each edge $e_j$ can be identified with an interval $[0,L_{e_j}]$ of the real line, and a coordinate $x_{e_j}$ can be assigned to each point along this interval.
	\par A function $f$ on the metric graph is denoted by the $n$-tuple
	$$
	f=\left\{f|_{e_1},f|_{e_2},\dots,f|_{e_{|E|}}\right\}
	$$ 
	of restrictions to the edges of $\Gamma$, in which each $f|_{e_j},j=1,2,\dots,|E|$ is a function of the interval $[0,L_{e_j}]$ that parameterizes the edge. That is, $f$ is a piecewise function on $\Gamma$. The Hilbert space $L^2(\Gamma)$ consists of functions that are measurable and square integrable on each $e_j$, and such that
	$$
	\lVert f\lVert^2_{L^2(\Gamma)}:=\sum_{e_j\in E,j=1}^{|E|}\lVert f\lVert^2_{L^2(e_j)}.
	$$
	\par Let $E_v$ denote the set of edges in $E$ that are incident to vertex $v\in V$, and $|E_v|$ denote the number of such edges. In a slight abuse of notation, we label these edges as $e_1,e_2,\dots,e_{|E_v|}$. At each vertex $v$ of $\Gamma$, the vertex conditions for $f\in L^2(\Gamma)$ can be expressed in the following form: 
	
	$$
	A_v
	\begin{pmatrix}
		f|_{e_1}(v)\\
		\vdots\\
		f|_{e_{|E_v|}}(v)
	\end{pmatrix}
	+B_v
	\begin{pmatrix}
		f'|_{e_1}(v)\\
		\vdots\\
		f'|_{e_{|E_v|}}(v)
	\end{pmatrix}
	=0,
	$$
	where $A_v$ and $B_v$ are $m\times d_v$ matrices and $m$ is any positive integer. The derivatives are assumed to be taken in the direction away from the vertex(i.e., into the edge), which will be called the outgoing directions. For example, the \textit{standard condition} at $v\in V$ is:
	\begin{equation}
		\left\{
		\begin{aligned}
			&f~\text{is continuous on vertex}~v: f|_{e_{j}}(v)=f|_{e_{\bar{j}}}(v), \forall e_{j},e_{\bar{j}}\in E_v,\\
			&\text{at vertex}~v~\text{one has}:                                                                                                                                                                                                                                                                                                                                                                                                                                                                                                                                                                                                                                                                                                                                                                                                                                                                                                                                                                                                                                                                                                                                                                                                                                                                                                                                                                                                                                                                                                                                                                                                                                                                                                                                                                                                                                                                                                                                                                                                                                                                                                                                                                                                                                                                                                                                                                                                                                                                                                                                                                                                                                                                                                              \varSigma_{e_j\in E_v}f'|_{e_j}(v)=0.
		\end{aligned}
		\right.
	\end{equation}
	Then the corresponding matrices $A_v$ and $B_v$ are
	$$
	A_v=
	\begin{pmatrix}
		1 & -1 & ~ & ~\\
		~ & \ddots & \ddots & ~\\
		~ & ~ & 1 & -1\\
		0 & \cdots & 0 & 0
	\end{pmatrix}
	,~B_v=
	\begin{pmatrix}
		0 & 0 & \cdots & 0 \\
		\vdots & \vdots & \ddots & \vdots \\
		0 & 0 & \cdots & 0\\
		1 & 1 & \cdots & 1
	\end{pmatrix}.
	$$
	In particular, when the boundary conditions (2.1) hold at a vertex of degree 2, the vertex can be eliminated, thus combining two adjacent edges into one smooth edge. Conversely, one usually also inserts degree-2 vertices satisfying the standard vertex conditions into edges for research purposes, and this process does not alter the spectrum of the quantum graph\upcite{ref12}.
	\par The Laplacian $\mathscr{H}$ takes the form $-d^2/dx^2$ on each edge. The domain of $\mathscr{H}$ consists of continuous functions on $\Gamma$ that, together with their derivatives along the edges, are square integrable, and satisfy the standard vertex condition at any vertex $v\in\Gamma$. This condition ensures that $\mathscr{H}$ is a self-adjoint operator in $L^2(\Gamma)$\upcite{ref13}. The pair $(\Gamma,\mathscr{H})$ is thus a quantum graph. A more detailed description will be provided in Section 3. 
	\subsection{Group representation theory}
	Let $G$ be a finite group, $\mathcal{K}$ be a field and $\mathcal{V}$ be a finite-dimensional vector space over $\mathcal{K}$. Denote by GL($\mathcal{V}$) the group of invertible linear transformations from $\mathcal{V}$ to itself. A group homomorphism $\rho: G\rightarrow \text{GL}(\mathcal{V})$ is called a \textit{linear $\mathcal{K}$-representation of $G$ in $\mathcal{V}$}(or just a \textit{representation of} $G$ for short). It means that $\rho$ satisfies:
	$$
	\begin{cases}
		\rho(g)\in\text{GL}(\mathcal{V}), &\forall g\in G;\\
		\rho(gh)=\rho(g)\rho(h), &\forall g,h\in G;\\
		\rho(e)=\text{1}_{\mathcal{V}}, 
	\end{cases}
	$$
	where $e$ is the identity element of $G$, and $\text{1}_{\mathcal{V}}$ is the identity transformation on $\mathcal{V}$. The dimension of $\mathcal{V}$ is called the \textit{dimension} of $\rho$, denoted by dim($\rho$). 
	\par Any time a natural representation can be expressed (up to isomorphism) as a direct sum or an extension of smaller representation, but it not always possible because a representation $\rho$ may have no non-trivial \textit{subrepresentation}\upcite{ref14} to try to \enquote{peel off}. This leads to the following special case:
	\begin{definition}
		\rm A $\mathcal{K}-$representation $\rho$ of $G$ acting on $\mathcal{V}$ is \textit{irreducible} if and only if $\mathcal{V}\neq$ 0 and there is no subspace $\mathcal{W}\subset\mathcal{V}$ stable under $\rho$~(i.e., $\rho(g)(\mathcal{W})\subset\mathcal{W}$ for all $g\in\mathcal{V}$), except 0 and $\mathcal{V}$ itself. 
	\end{definition}
	\par Finite groups often arise in the study of symmetrical objects, particularly when those objects admit only a finite number of structure-preserving transformations. For example, cyclic groups describe objects that possess only rotational symmetry:
	\begin{definition}
		\rm A group $G$ is called a \textit{cyclic group} if there is an element $g\in G$ such that every element of $G$ is some integral power of $g$. The group $G$ is said to be \textit{generated by $g$} and $g$ is called a \textit{generator} of $G$, then $G$ is denoted by $G=\langle g \rangle=\left\{g^n:n\in \mathbb{Z}\right\}$.
	\end{definition}
	\par Let $G$ be a finite cyclic group of order $n$, generated by $g$. We denote $G$ as $G_n=\left\{g,g^2,g^3,\dots,g^n=e\right\}$, where $e$ is the identity element of $G_n$. Let $\mathcal{K}=\mathbb{C}$, all finite-dimensional irreducible complex representations of a cyclic group $G_n$ are one-dimensional, and there are exactly $n$ such representations\upcite{ref14}. For instance, the irreducible complex representations of $G_n$ are summarized in Table 2.1, where $\omega=\text{e}^{\frac{2\pi\text{i}}{n}}$ denotes an $n-$th primitive root of unity.
	\begin{table}[H]
		\caption{Irreducible complex representations of $G_n$.}
		\centering
		\begin{tabular}{ccccccc}
			\toprule
			$~$ & $e$ & $g$ & $g^2$ & $g^3$ & $\cdots$ & $g^{n-1}$\\
			\midrule
			$\rho_0$ & $1$ & $1$ & $1$ & $1$ & $\cdots$ & $1$\\
			$\rho_1$ & $1$ & $\omega$ & $\omega^2$ & $\omega^3$ & $\cdots$ & $\omega^{n-1}$\\
			$\rho_2$ & $1$ & $\omega^2$ & $\omega^4$ & $\omega^6$ & $\cdots$ & $\omega^{2(n-1)}$\\
			$\vdots$ & $\vdots$ & $\vdots$& $\vdots$ & $\vdots$ & $\vdots$ & $\vdots$\\
			$\rho_{n-1}$ & $1$ & $\omega^{n-1}$ & $\omega^{2(n-1)}$ & $\omega^{3(n-1)}$ & $\cdots$ & $\omega^{(n-1)^2}$\\
			\bottomrule
		\end{tabular}
	\end{table}
	
	\par Moreover, for the case $G_{n_1}\otimes G_{n_2}$. Let $\mathcal{K}=\mathbb{C}$, cyclic group $G_{n_1}=\langle g_1 \rangle$ and $G_{n_2}=\langle g_2 \rangle$ have $n_1$ and $n_2$ one-dimensional irreducible complex representations respectively, denoted as $\rho_s$ and $\varrho_t$, $s=0,1,\dots,n_1-1, t=0,1,\dots,n_2-1$. Then, the external tensor product $\rho_s\boxtimes\varrho_t$ is an irreducible representation of  $G_{n_1}\otimes G_{n_2}$~(\cite{ref14}, Proposition 2.3.23), denoted as $\tau_{s,t}$. That is, for any $(g_1^\kappa,g_2^\iota)\in G_{n_1}\otimes G_{n_2}, \kappa=1,\dots,n_1,\iota=1,\dots,n_2$:
	$$
	\tau_{s,t}(g_1^\kappa,g_2^\iota)=\rho_s(g_1^\kappa)\boxtimes\varrho_t(g_2^\iota), s=0,1,\dots,n_1-1;t=0,1,\dots,n_2-1.
	$$
	And if $\omega_1$ is the $n_1$-th unit root, $\omega_2$ is the $n_2$-th unit root, then $\tau_{s,t}(g_1^\kappa,g_2^\iota)=(\omega_1^s)^\kappa(\omega_2^t)^\iota$. Therefore, the group $G_{n_1}\otimes G_{n_2}$ has $n_1n_2$ irreducible complex representations, and for any element $(g_1^\kappa,g_2^\iota)\in G_{n_1}\oplus G_{n_2}$, the sum of these $n_1n_2$ irreducible complex representations takes the following form:
	\begin{equation}
		\sum_{s=1}^{n_1}\sum_{t=1}^{n_2}\tau_{s,t}(g_1^\kappa,g_2^\iota)=
		\begin{cases}
			n_1n_2, &\text{if}~\kappa=n_1, \iota=n_2,\\
			0, &\text{the other cases}.   	
		\end{cases}
	\end{equation}
	\section{Construction of Cartesian product graph with group action}
	\par Similar to the construction of Cartesian product graphs in graph theory, defining a Cartesian product metric graph funther requires the specifying the edge lengths of the graphs. We present the formal  definition as follows:
	\begin{definition}
		\rm Let $\Gamma_{n_1}$ and $\Gamma_{n_2}$ are metric graphs with $n_1,n_2$ vertices respectively, $V(\Gamma_{n_1})=\left\lbrace u_i \right\rbrace_{i=1}^{n_1}, V(\Gamma_{n_2})=\left\lbrace v_i \right\rbrace_{i=1}^{n_2}$. The \textit{Cartesian product metric graph} of them is defined as $\Gamma_{n_1} \Box \Gamma_{n_2}$. Its vertex set is $V(\Gamma_{n_1})\times V(\Gamma_{n_2})$ and edge set consists of all pairs $(u_{i},v_{i})~(u_{\bar{i}},v_{\bar{i}})$~such that either $u_{i}\sim u_{\bar{i}}\in E(\Gamma_{n_1})$ and $v_{i}=v_{\bar{i}}$, or $v_{i}\sim v_{\bar{i}}\in E(\Gamma_{n_2})$ and $u_{i}=u_{\bar{i}}$. Two vertices $(u_{i},v_{i})$ and $(u_{\bar{i}},v_{\bar{i}})$ are adjacent if and only if one of the following cases holds: 
		\begin{itemize}
			\item[1)] $u_i=u_{\bar{i}}$, $v_i$ and $v_{\bar{i}}$ are adjacent in graph $\Gamma_{n_2}$. In this case, the length between them is the length of edge between $v_i$ and $v_{\bar{i}}$ in $\Gamma_{n_2}$;
			\item[2)] $v_i=v_{\bar{i}}$, $u_i$ and $u_{\bar{i}}$ are adjacent in graph $\Gamma_{n_1}$. In this case, the length between them is the length of edge between $u_i$ and $u_{\bar{i}}$ in $\Gamma_{n_1}$. 
		\end{itemize}
	\end{definition}
	\par For example, Figure 3.1 shows a metric graph $\Gamma_2$ with edge length 1, a metric graph $\Gamma_3$ with two edges of length 1 and 2, and their Cartesian product graph $\Gamma_2\Box\Gamma_3$.
	\begin{figure}[H]
		\centering 
		\includegraphics[width=10cm]{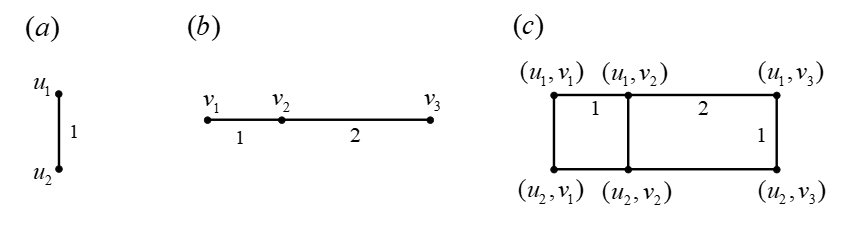}\\
		\caption{(a)Metric graph $\Gamma_2$ with edge length 1. (b)Metric graph $\Gamma_3$ with edges length 1 and 2. (c)Cartesian product graph $\Gamma_2\Box\Gamma_3$.}
	\end{figure} 
	Analogous to the structure of $\mathbb{Z}^n$-periodic quantum graphs\upcite{ref13}, quantum graphs $\Gamma$ with group $G_n$-action can also be constructed and described. First, such a graph must include an underlying graph $\Gamma/G_n$ with finite vertex set $V=V(\Gamma/G_n)$ and edge set $E=E(\Gamma/G_n)$, endowed with an action of the group $G_n$ that preserves vertex-edge incidences and for which $\Gamma/G_n$ is a finite graph. The action of $g^\kappa\in G_n$ on a vertex or edge of $\Gamma$ is denoted by $v\mapsto g^\kappa v$ or $e_j\mapsto g^\kappa e_j$ respectively. A fundamental domain of the $G_n$-action is denoted by $W$, which is assumed to contain finitely many vertices and edges. 
	\par Next, building on the Definition 3.1, we assume this metric graph is invariant under the action of $G_n$, and let the action of $g^\kappa\in G_n$ on a point $y$ in $\Gamma$~($y$ may be in the interior of an edge or at an endpoint corresponding to a vertex) be denoted by $y\mapsto g^\kappa y$. This enables us to define standard function spaces on any edges in $\Gamma$, such as $H^2(e_j)$. Then we can give the definition of a graph equipped with a cyclic group action:
	\begin{definition}
		\rm Let $G_n$ be a cyclic group. A metric graph $\Gamma$ is said to be a \textit{graph equipped with $G_n$ action} if the mapping $(g^\kappa,y)\in G_n\times \Gamma\mapsto g^\kappa y\in \Gamma$ satisfies:
		\begin{enumerate}
			\item \textit{Group Action:} 
			$\forall g^\kappa\in G_n,$ the mapping $y\mapsto g^\kappa y$ is a bijection of~$\Gamma$; $\forall y\in \Gamma, ey=y,$ where $e\in G_n$ is the identity element; $\forall g^\kappa,g^{\bar{\kappa}}\in G_n, y\in \Gamma, (g^\kappa g^{\bar{\kappa}})y=g^\kappa(g^{\bar{\kappa}}y)$.
			\item \textit{Continuity:} $\forall g^\kappa\in G_n$, the mapping $y\mapsto g^\kappa y$ from $\Gamma$ to itself is continuous. 
			\item \textit{Faithfulness:} If $y\in\Gamma$ and $g^\kappa y=y$, then $g^\kappa=e$.
			\item \textit{Discreteness:} For any $y\in \Gamma$, there is a neighborhood $U$ of $y$ such that $g^\kappa y\notin U$ for $g^\kappa\neq e$.
			\item \textit{Co-compactness}: The space of orbits $\Gamma/G_n$ is compact, i.e. the entire graph can be obtained by the $G_n$-shifts of a compact subset. 
			\item \textit{Structure preservation:} 
			\begin{itemize}
				\item For vertices $v_i$ and $v_{\bar{i}}$, $g^\kappa v_i\sim g^\kappa v_{\bar{i}}$ if and only if $v_i\sim v_{\bar{i}}$~(\enquote{$\sim$} means adjacent in there), and $G_n$ acts bijectively on the set of edges.
				\item In the case of a metric or quantum graph, the action preserves the length of edges: $L_{g^\kappa e_j}=L_{e_j}$.
				\item In the case of a quantum graph, the action commutes with the Hamiltonian $\mathscr{H}$ (and in particular, preserves the vertex conditions).
			\end{itemize}
		\end{enumerate}
	\end{definition}
	\par For example, a graph on $n$ vertices is said to be a \textit{circulant graph} $C_n({\rm\textbf{s}})$ if it admits an action by the cyclic group $G_n$ and is defined by a vector ${\rm \textbf{s}}=(s_1,s_2,s_3,...,s_k)$. Here, each component $s_{\bar{k}}\in \mathbb{N}^*$~(for $\bar{k}=1,2,...,$)~satisfies $1\leqslant s_{\bar{k}} \leqslant n/2$, and two vertices $v_i$ and $v_{\bar{i}}$ are adjacent if and only if $i-\bar{i}\equiv \pm s_{\bar{k}}$~(mod~$n$)~for some $\bar{k}\in{1,2,...,k}$. 
	\par The fundamental domain $W \subset \Gamma$ is a compact (or finite in the discrete case) subset for the action of $G_n$ on $\Gamma$, if it both satisfies the following conditions: 
	\begin{enumerate}
		\item The union of all $G_n-$ shifts of $W$ covers the whole $\Gamma$, i.e., $\bigcup_{g^\kappa\in G_n}g^\kappa W=\Gamma$.
		\item Different shifted copies of $W$, i.e., $g^\kappa W$ and $g^{\bar{\kappa}}W$ with $g^{\kappa}\neq g^{\bar{\kappa}}\in G_n$, intersect in at most finitely many points, none of which are vertices. 
	\end{enumerate}
	It is obvious that the choice of the fundamental domain $W$ is not unique.
	For example, a fundamental domain of $C_6\left(1,2\right)$ is shown in Figure 3.2(b). We introduce virtual vertices at the center of edges in $C_6\left(1,2\right)$. This process effectively doubles the number of edges in $C_6\left(1,2\right)$, with the new edges represented by dashed lines. The edges incident to the original vertices $v_i$ of $C_6\left(1,2\right)$ are labeled as $e_{i,1},e_{i,2},...,e_{i,d_{v_i}}$. Each original vertex $v_i$ together with all adjacent virtual vertices and the associated edges forms a fundamental domain.
	\begin{figure}[H]
		\centering 
		\includegraphics[width=8cm]{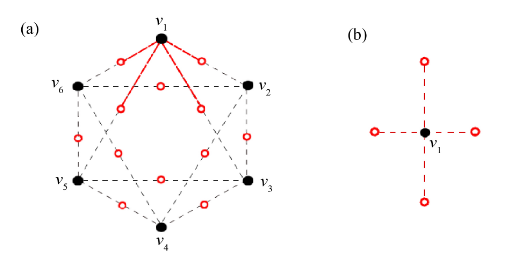}\\
		\caption{(a)The circulant graph $C_6\left(1,2\right)$ with dummy vertices. (b)The fundamental domain of $C_6\left(1,2\right)$.}
		\label{fig:fig2}
	\end{figure} 
	\par Last, we renders $\Gamma$ a quantum graph with $G_n$ actions by pairing it with Laplacian $\mathscr{H}:=-d^2/dx^2$ after parameterizing each edge $e_j\in E(\Gamma)$, it can also commute with the $G_n$ action. Let functions $f=\left\{f|_{e_j}\right\}_{e_j\in E(\Gamma)}$ defined on $\Gamma$ satisfy the condition (2.1) at each vertex, $f'|_{e_j}(v)$ is the derivative of $f|_{e_j}$ at the vertex $v$ and has outgoing direction. Let 
	$$
	H^2(\Gamma)=\left\{f=\left\{f|_{e_j}\right\}_{e_j\in E(\Gamma)}:f \text{~is continuous};f|_{e_j}\in H^2(e_j),\forall e_j\in E(\Gamma);f,f',f''\in L^2(\Gamma)\right\},
	$$
	in which $f'$ and $f''=-\frac{d^2}{dx^2}$ are taken on each edge with respect to the coordinates introduced above. Then the domain of $\mathscr{H}$ and its action form is:
	\begin{equation}
		\mathcal{D}(\mathscr{H})=\left\{f\in H^2(\Gamma):f~\text{satisfies (2.1) forall}~v\in V(\Gamma)\right\},
	\end{equation}
	\begin{equation}
		(\mathscr{H}f)(x)=-f''(x).
	\end{equation}
	The standard condition (2.1) makes sense because $f\in H^2(\Gamma)$ has well defined derivatives at the endpoints of each edge and $\mathscr{H}$ is self-adjoint in $L^2(\Gamma)$ \upcite{ref13}. Therefore, let $\Gamma_{n_1}$ and $\Gamma_{n_2}, n_1,n_2\in\mathbb{N}^*$ be the metric graph equipped with group $G_{n_1}$ and $G_{n_2}$ action respectively. The Cartesian product graph $\Gamma_{n_1}\Box\Gamma_{n_2}$ will be equipped with the group $G_{n_1}\otimes G_{n_2}$ action and the cartesian product quantum graph $\Gamma_{n_1}\Box\Gamma_{n_2}$ equipped with group $G_{n_1}\otimes G_{n_2}$ action is a triple 
	$$
	\begin{Bmatrix}
		\rm \Gamma_{n_1}\Box\Gamma_{n_2}, \text{Laplacian}~\mathscr{H},· \text{standard ~conditions(2.1)}
	\end{Bmatrix}.
	$$
	\section{The decomposition of function space and Laplacian}
	\par In this secction, we mainly give the decomposition of the space  $L^2(\Gamma_{n_1}\Box\Gamma_{n_2})$ and the Laplacian defined on the metric graph $\Gamma_{n_1}\Box\Gamma_{n_2}$.
	\begin{theorem}	
		\rm Let $\Gamma_{n_1}\Box\Gamma_{n_2}$ is a Cartesian product graph equipped with the group	$G_{n_1}\otimes G_{n_2}$ action. Let $V(\Gamma_{n_1}\Box\Gamma_{n_2})=\left\{v_i\right\}_{i=1}^{n_1n_2}$, the edges connected to each vertex $v_i$ are denoted as $e_{i,1}, e_{i,2},\dots,e_{i,|E_{v_i}|}$. Then the space~$L^2\left(\Gamma_{n_1}\Box\Gamma_{n_2}\right)$ can be decomposed into the direct sum of $n_1n_2$ square-integrable function spaces, i.e.
		$$
		L^2\left(\Gamma_{n_1}\Box\Gamma_{n_2}\right)\cong \oplus_{s=0}^{n_1-1}\left(\oplus_{t=0}^{n_2-1}\mathcal{F}_{s,t}\right),
		$$
		where~$\mathcal{F}_{s,t}$ is square-integrable function spaces on $\Gamma_{n_1}\Box\Gamma_{n_2}$. 
	\end{theorem}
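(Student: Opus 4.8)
The plan is to realize the finite abelian group $G:=G_{n_1}\otimes G_{n_2}$ as a group of unitary operators on $L^2(\Gamma_{n_1}\Box\Gamma_{n_2})$ and then split the Hilbert space into the isotypic components attached to the $n_1 n_2$ one-dimensional irreducible representations $\tau_{s,t}$ recorded in the character table. Concretely, for each $g=(g_1^\kappa,g_2^\iota)\in G$ I would define the operator $\pi(g)$ by $(\pi(g)f)(x)=f(g^{-1}x)$, where $g^{-1}x$ denotes the image of the point $x\in\Gamma_{n_1}\Box\Gamma_{n_2}$ under the group action of Definition 3.3. First I would check that $\pi$ is a unitary representation: the group-action axioms give $\pi(gh)=\pi(g)\pi(h)$ and $\pi(e)=\mathrm{Id}$, while the structure-preservation axiom $L_{g^\kappa e_j}=L_{e_j}$ guarantees that $g$ permutes the edges by length-preserving isometries, so $\lVert\pi(g)f\rVert_{L^2}=\lVert f\rVert_{L^2}$ and each $\pi(g)$ is unitary.

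Second, for each pair $(s,t)$ with $0\le s\le n_1-1$ and $0\le t\le n_2-1$, I would introduce the operator
$$
P_{s,t}=\frac{1}{n_1 n_2}\sum_{g\in G}\overline{\tau_{s,t}(g)}\,\pi(g).
$$
Using that $\pi$ is a unitary homomorphism and that each $\tau_{s,t}$ is one-dimensional with $\overline{\tau_{s,t}(g)}=\tau_{s,t}(g^{-1})$, the standard reindexing $g\mapsto hg$ shows that $\pi(h)P_{s,t}=\tau_{s,t}(h)P_{s,t}$, whence each $P_{s,t}$ is self-adjoint and idempotent and $P_{s,t}P_{s',t'}=0$ whenever $(s,t)\ne(s',t')$; these are exactly the orthogonal projections onto the isotypic subspaces. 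I then set $\mathcal{F}_{s,t}:=P_{s,t}\,L^2(\Gamma_{n_1}\Box\Gamma_{n_2})$, which is the space of functions $f$ satisfying the Floquet-type equivariance $\pi(g)f=\tau_{s,t}(g)f$ for all $g\in G$.

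Third, to obtain the full decomposition I would verify $\sum_{s,t}P_{s,t}=\mathrm{Id}$. Interchanging the order of summation gives
$$
\sum_{s=0}^{n_1-1}\sum_{t=0}^{n_2-1}P_{s,t}=\frac{1}{n_1 n_2}\sum_{g\in G}\Big(\sum_{s,t}\overline{\tau_{s,t}(g)}\Big)\pi(g),
$$
and the inner sum is evaluated by the completeness relation (2.2): the conjugate character sum vanishes unless $g=e$, where it equals $n_1 n_2$, leaving only the term $\pi(e)=\mathrm{Id}$. Combined with the mutual orthogonality of the $P_{s,t}$, this yields the orthogonal direct-sum decomposition $L^2(\Gamma_{n_1}\Box\Gamma_{n_2})\cong\bigoplus_{s=0}^{n_1-1}\bigl(\bigoplus_{t=0}^{n_2-1}\mathcal{F}_{s,t}\bigr)$.

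The step I expect to be the main obstacle is the first one: showing rigorously that $\pi$ is well defined and unitary on all of $L^2$. One has to verify that the action, which is prescribed a priori on vertices and edges, extends to a measurable, measure-preserving bijection of the metric graph whose edge-length invariance makes each $\pi(g)$ an isometry. Here the faithfulness and discreteness axioms rule out degenerate orbits, and co-compactness guarantees that integration over $\Gamma_{n_1}\Box\Gamma_{n_2}$ reduces to a finite sum over the $G$-translates of a fundamental domain, which is precisely what delivers the isometry property. Once unitarity is secured, the remaining steps are routine applications of character orthogonality for the finite abelian group $G_{n_1}\otimes G_{n_2}$.
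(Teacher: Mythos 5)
Your proposal is correct and follows essentially the same route as the paper: the operator $P_{s,t}=\frac{1}{n_1n_2}\sum_{g\in G}\overline{\tau_{s,t}(g)}\pi(g)$ you construct is exactly the edgewise averaging map of equation (4.1) (up to the harmless relabeling $\overline{\tau_{s,t}(g)}=\tau_{s,t}(g^{-1})$ for these one-dimensional characters), and both arguments recover $f=\sum_{s,t}f_{s,t}$ from the character-sum identity (2.2). Your version is in fact somewhat more complete, since you explicitly verify unitarity of $\pi$, idempotency and mutual orthogonality of the $P_{s,t}$, and $\sum_{s,t}P_{s,t}=\mathrm{Id}$, whereas the paper asserts surjectivity and the norm identity (4.4) without detailed justification.
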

	\begin{proof}
		\par Introduce a dummy vertex at the midpoint of each edge of $\Gamma_{n_1}\Box\Gamma_{n_2}$. The vertex $v_1$, together with all its adjacent dummy vertices and the edges incident to $v_1$ forms a fundamental domain $W$. Take any function $f\in L^2(\Gamma_{n_1}\Box \Gamma_{n_2})$ restricted to the edge $e_{i,m}$ is denoted by $f|_{e_{i,m}}, i=1,2,...,d_{v_i}$. Define a mapping $P$ from $L^2(\Gamma_{n_1}\Box\Gamma_{n_2})$ to $\oplus_{s=0}^{n_1-1}\left(\oplus_{t=0}^{n_2-1}\mathcal{F}_{s,t}\right)$. According to the irreducible representation of $G_{n_1}\otimes G_{n_2}$, there is a mapping $P_{s,t}$ from the function $f$ to the function $f_{s,t}\in\mathcal{F}_{s,t}$, for $s=0,1,\dots n_1-1,t=0,1,\dots,n_2-1$. Let $g_{\kappa,\iota}=\left(g_1^\kappa,g_2^\iota\right),\kappa=1,\dots,n_1,\iota=1,\dots,n_2$, then  
		\begin{equation}
			\begin{aligned}
				P_{s,t}f\mid_{e_{i,m}}
				&=f_{s,t}\mid_{e_{i,m}}\\
				&=\tfrac{1}{n_1n_2}\left[\tau_{s,t}(g_{1,1})f\mid_{g_{1,1}e_{i,m}}+\tau_{s,t}(g_{2,1})f\mid_{g_{2,1}e_{i,m}}+\cdots+\tau_{s,t}(g_{n_1,1})f\mid_{g_{n_1,1}e_{i,m}}\right.\\
				&\quad\quad+\tau_{s,t}(g_{1,2})f\mid_{g_{1,2}e_{i,m}}+\tau_{s,t}(g_{2,2})f\mid_{g_{2,2}e_{i,m}}+\cdots+\tau_{s,t}(g_{n_1,2})f\mid_{g_{n_1,2}e_{i,m}}\\
				&\quad\quad+\dots\\
				&\quad\quad\left.+\tau_{s,t}(g_{1,n_2})f\mid_{g_{1,n_2}e_{i,m}}+\tau_{s,t}(g_{2,n_2})f\mid_{g_{2,n_2}e_{i,m}}+\cdots+\tau_{s,t}(g_{n_1,n_2})f\mid_{g_{n_1,n_2}e_{i,m}}\right].
			\end{aligned}
		\end{equation}
		where $g_{n_1,n_2}\in G_{n_1}\otimes G_{n_2}$ is the identity element. Then the mapping $P$ is surjective. Since
		\begin{equation}
			\sum_{s=0}^{n_1-1}\sum_{t=0}^{n_2-1}\tau_{s,t}(g_{\kappa,\iota})=
			\begin{cases}
				n_1n_2, &\text{if}~\kappa=n_1, \iota=n_2,\\
				0, &\text{the other cases}.   	
			\end{cases}
		\end{equation}
		then 
		\begin{equation}
			f=f_{1,1}+f_{2,1}+\cdots+f_{n_1,1}+f_{1,2}+f_{2,2}+\cdots+f_{n_1,2}+\dots+f_{1,n_2}+f_{2,n_2}+\cdots+f_{n_1,n_2}.
		\end{equation}
		and
		\begin{equation}
			\big\| f\big\|_{L^2(\Gamma_{n_1}\Box\Gamma_{n_2})}^2=\sum_{s=0}^{n_1-1}\sum_{t=0}^{n_2-1}\big\| f^2_{s,t}\big\|_{\mathcal{F}_{s,t}},
		\end{equation}
		it is proved that the mapping $P$ is an isometric isomorphic mapping. 
	\end{proof}
	\begin{corollary}
		\rm If $n_1$ and $n_2$ are coprime, it can be proved that $G_{n_1}\otimes G_{n_2}\cong G_{n_1n_2}=\langle g \rangle$(\cite{ref15}, Theorem 9.3). That is, there exists an isomorphism mapping the group element $\left(g_1^\kappa,g_2^\iota\right)\in G_{n_1}\otimes G_{n_2}$ to $g^\epsilon,\epsilon=\kappa n_2+\iota n_1\mod (n_1n_2)$. The element $g^\epsilon$ of $G_{n_1n_2}$ satisfies $g^\epsilon e_{1,m}=e_{\epsilon+1,m}$ for $\epsilon=0,1,...,n_1n_2-1$, where $g^0=g^{n_1n_2}=e$. 
		All irreducible complex representations of $G_{n_1n_2}$ are denoted as $\rho_{r}, r=0,1,\dots,n_1n_2-1$, then $\tau_{s,t}(g_1^\kappa,g_2^\iota)=\rho_{r}(g^\epsilon), r=(sn_2+tn_1)\mod (n_1n_2)$.
		And the space decomposition can be rewritten as $L^2(\Gamma_{n_1}\Box\Gamma_{n_2})\cong\oplus_{r=0}^{n_1n_2-1}\mathcal{F}_r$.
		The function $f_r\in \mathcal{F}_r$ can be represented by its restriction $f_r|_{W}$ on the fundamental domain $W$ due to (4.1):
		$$
		f_r|_{g^\epsilon W}=
		\begin{cases}
			f_r|_W, &\epsilon=0,\\
			\rho_r(g^{\epsilon-1})f_r|_W, &\epsilon=1,2,3,...,n_1n_2-1.
		\end{cases}
		$$
	\end{corollary}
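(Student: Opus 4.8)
The plan is to prove the corollary in four stages: establish the group isomorphism explicitly, transport the irreducible representations across it, reindex the decomposition of Theorem~4.1, and finally read off the fundamental-domain description of $f_r$.

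First I would verify that the stated map $\phi\colon(g_1^\kappa,g_2^\iota)\mapsto g^{\,\kappa n_2+\iota n_1}$ is a well-defined group isomorphism. Well-definedness is immediate, since increasing $\kappa$ by $n_1$ or $\iota$ by $n_2$ changes the exponent by a multiple of $n_1n_2$; it is a homomorphism because the exponent is additive in $(\kappa,\iota)$. Injectivity is where coprimality enters: if $\kappa n_2+\iota n_1\equiv 0\pmod{n_1n_2}$, reducing modulo $n_1$ gives $\kappa n_2\equiv 0\pmod{n_1}$, and $\gcd(n_1,n_2)=1$ forces $\kappa\equiv 0\pmod{n_1}$, and symmetrically $\iota\equiv 0\pmod{n_2}$; as both groups have order $n_1n_2$, $\phi$ is then bijective. (Equivalently, $(g_1,g_2)$ has order $\mathrm{lcm}(n_1,n_2)=n_1n_2$, so $G_{n_1}\otimes G_{n_2}$ is cyclic.) The edge identity $g^\epsilon e_{1,m}=e_{\epsilon+1,m}$ then merely records that the generator of $G_{n_1n_2}$ permutes the $n_1n_2$ translated copies of $W$ cyclically.

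Next I would transport the irreducible characters across $\phi$. Both $\tau_{s,t}$ and $\rho_r\circ\phi$ are one-dimensional, hence determined by their values on a generating set, so it suffices to compare roots of unity. Writing every factor as a single $n_1n_2$-th root, $\tau_{s,t}(g_1^\kappa,g_2^\iota)=\omega_1^{s\kappa}\omega_2^{t\iota}=\exp\!\big(2\pi i(s\kappa n_2+t\iota n_1)/(n_1n_2)\big)$, while $\rho_r(g^{\epsilon})=\exp(2\pi i\,r\epsilon/(n_1n_2))$ with $\epsilon=\kappa n_2+\iota n_1$. Matching the two exponents modulo $n_1n_2$ for all $\kappa,\iota$ pins down $r$ through the congruences it must satisfy modulo $n_1$ and modulo $n_2$; coprimality guarantees these congruences have a unique solution $r\in\{0,\dots,n_1n_2-1\}$ and that $(s,t)\mapsto r$ is a bijection of index sets. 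I expect this matching to be the main obstacle: it is the one step genuinely using the Chinese-remainder structure, and the bookkeeping of exponents must be done carefully to extract the correct correspondence between $r$ and $(s,t)$.

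With the two families of irreducibles identified as the same set under this bijection, the decomposition $L^2(\Gamma_{n_1}\Box\Gamma_{n_2})\cong\oplus_{s,t}\mathcal F_{s,t}$ of Theorem~4.1 rewrites verbatim as $\oplus_{r=0}^{n_1n_2-1}\mathcal F_r$ by setting $\mathcal F_r:=\mathcal F_{s,t}$ for the matched pair; this is pure reindexing and needs no new analysis. Finally, the description of $f_r$ on the fundamental domain follows by specializing the projection formula (4.1) to the cyclic action: an element $f_r\in\mathcal F_r$ lies in the $\rho_r$-isotypic component, so $f_r|_{g^\epsilon W}=\rho_r(g^{\epsilon-1})f_r|_{W}$ for $\epsilon\ge 1$ and $f_r|_W$ for $\epsilon=0$. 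I would verify this directly by substituting $g^\epsilon e_{1,m}=e_{\epsilon+1,m}$ into (4.1), after which the claim is a one-line consequence of equivariance.
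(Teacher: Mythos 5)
Your four-stage strategy (prove the isomorphism, transport the characters, reindex the direct sum, specialize (4.1)) is the natural one, and it is essentially all the paper relies on, since the corollary is asserted with only a citation for the group-theoretic fact and no further proof. Stages one and three are fine. The genuine gap sits exactly at the step you defer as ``bookkeeping'': if you carry out your own matching, it does not close on the stated formula for $r$. With $\epsilon=\kappa n_2+\iota n_1$ and $\omega=e^{2\pi i/(n_1n_2)}$ one has $\omega^{n_2}=\omega_1$, $\omega^{n_1}=\omega_2$, hence $\rho_r(g^{\epsilon})=\omega^{r(\kappa n_2+\iota n_1)}=\omega_1^{r\kappa}\omega_2^{r\iota}$; equating this with $\tau_{s,t}(g_1^{\kappa},g_2^{\iota})=\omega_1^{s\kappa}\omega_2^{t\iota}$ for all $\kappa,\iota$ forces $r\equiv s\pmod{n_1}$ and $r\equiv t\pmod{n_2}$. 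But $r=sn_2+tn_1$ satisfies $r\equiv sn_2\pmod{n_1}$ and $r\equiv tn_1\pmod{n_2}$, which agrees with the required congruences only if $n_2\equiv 1\pmod{n_1}$ and $n_1\equiv 1\pmod{n_2}$ --- impossible for $n_1,n_2\ge 2$. The two explicit formulas in the statement belong to two different CRT isomorphisms: $r=sn_2+tn_1$ is correct for the identification $g^{x}\mapsto(g_1^{x},g_2^{x})$, under which $(g_1^{\kappa},g_2^{\iota})$ corresponds to the exponent $x$ with $x\equiv\kappa\pmod{n_1}$, $x\equiv\iota\pmod{n_2}$, not to $x=\kappa n_2+\iota n_1$. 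Your proof must commit to one isomorphism and replace the other formula by the corresponding CRT solution. Since $(s,t)\mapsto r$ is a bijection either way, the conclusion $L^2(\Gamma_{n_1}\Box\Gamma_{n_2})\cong\oplus_{r=0}^{n_1n_2-1}\mathcal{F}_r$ survives, but the matching step as you plan it would fail to produce the stated pairing, and this is precisely the step you flagged but did not resolve.

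A smaller unresolved point is stage four. Substituting the cyclic shift into (4.1) as you propose gives $f_r|_{g^{\epsilon}W}=\tfrac{1}{n_1n_2}\sum_{j}\rho_r(g^{j})f|_{g^{j+\epsilon}W}=\rho_r(g^{\epsilon})^{-1}f_r|_{W}=\rho_r(g^{n_1n_2-\epsilon})f_r|_{W}$, i.e.\ the multiplier is the \emph{inverse} of $\rho_r(g^{\epsilon})$, not $\rho_r(g^{\epsilon-1})$ read literally. You should derive this explicitly rather than appeal to ``equivariance'' in one line, because the direction of the exponent (whether the character or its conjugate appears) is decided exactly there; as written, your sketch simply reproduces the formula from the statement without checking it against the projection (4.1).
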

	\par In addition, circulant graph $C_{n_1n_2}(n_1,n_2)$ also admits the action of the cyclic group $G_{n_1n_2}$. Due to the Theorem 3.1 in \cite{ref16}, if $n_1,n_2$ are coprime, i.e. $gcd(n_1,n_2)=1$, then $C_{n_1n_2}(n_1,n_2)\cong\Gamma_{n_1}\Box\Gamma_{n_2}$. Let $g_1$ denote a clockwise rotation by $2\pi/n_1$ degrees and $g_2$ denote a counterclockwise rotation by $2\pi/n_2$ degrees. Then the vertex $(g_1^\kappa,g_2^\iota)v_1\in\Gamma_{n_1}\Box \Gamma_{n_2}$ corresponds to vertex $v_{\epsilon+1}\in C_{n_1n_2}\left(n_1,n_2\right), \epsilon=\kappa n_2+\iota n_1\mod (n_1n_2)$. Therefore, we can reduce the decomposition of the function space on $\Gamma_{n_1}\Box\Gamma_{n_2}$ to that on the $C_{n_1n_2}(n_1,n_2)$ by Corollary 4.2.
	\begin{example}
		\rm Let $n_1=3,n_2=4$, then $gcd(3,4)=1$ and $G_3\otimes G_4\cong G_{12}, \Gamma_3\Box\Gamma_4\cong C_{12}(3,4)$. so we can transform the decomposition of the function space on $\Gamma_3\Box\Gamma_4$ into the $C_{12}\left(3,4\right)$.  
		\begin{figure}[H]
			\centering 
			\includegraphics[width=5cm]{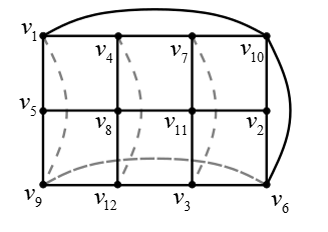}\\
			\caption{The cartesian graph of the $\Gamma_3$ and $\Gamma_4$.}
			\label{fig:fig3}
		\end{figure}
		\begin{figure}[H]
			\centering 
			\includegraphics[width=10cm]{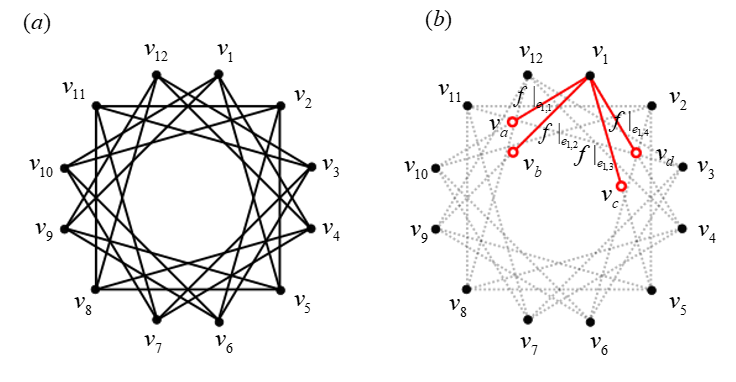}\\
			\caption{(a)~The circulant graph $C_{12}(3,4)$.~(b)~The fundamental domain of $C_{12}(3,4)$.}
			\label{fig:fig4}
		\end{figure}
		The circulant graph $C_{12}(3,4)$ is equipped with the cyclic group $G_{12}$ action. The fundanmental domain $W$ of $C_{12}(3,4)$ is constructed as shown in Figure 4.2(b). The 12 irreducible representations of $G_{12}$ yield the identification: 
		$$
		L^2(C_{12}(3,4))=\oplus_{l=0}^{11}\mathcal{F}_r.
		$$
		For any $f\in L^2(C_{12}(3,4))$, the $r$-th irreducible representation of the cyclic group $G_{12}$ yields $f_r\in\mathcal{F}_r$, for $r=0,1,2,\dots,11$, 
		\begin{equation}
			\begin{aligned}
				&f_r|_{e_{1,1}}=\frac{\rho_r(e)f|_{e_{1,1}}+\rho_r(g)f|_{e_{2,1}}+\rho_r(g^2)f|_{e_{3,1}}+\cdots+\rho_r(g^{11})f|_{e_{12,1}}}{12},\\
				&f_r|_{e_{1,2}}=\frac{\rho_r(e)f|_{e_{1,2}}+\rho_r(g)f|_{e_{2,2}}+\rho_r(g^2)f|_{e_{3,2}}+\cdots+\rho_r(g^{11})f|_{e_{12,2}}}{12},\\
				&\vdots\\
				&f_r|_{e_{12,4}}=\frac{\rho_r(e)f|_{e_{12,4}}+\rho_r(g)f|_{e_{1,4}}+\rho_r(g^2)f|_{e_{2,4}}+\cdots+\rho_r(g^{11})f|_{e_{11,4}}}{12},\\
			\end{aligned}
		\end{equation}
		Moreover, the function $f_r\in\mathcal{F}_r$ can be expressed in terms of its restriction to the fundamental domain $W$, denoted by $f_r|_W$:
		$$
		f_r|_{g^\epsilon W}=
		\begin{cases}
			f_r|_W &, \epsilon=0,\\
			\rho_r(g^{\epsilon-1})f_r|_W &, \epsilon=1,2,\dots,11.
		\end{cases}
		$$
		This is consistent with the conclusion in Remark 4.2.
	\end{example}
	\begin{theorem}
		\rm The domain of Laplacian $\mathscr{H}:=-\frac{d^2}{dx^2}$ defined on $\Gamma_{n_1}\Box\Gamma_{n_2}$ is 
		\begin{equation}
			\begin{aligned}
				\mathcal{D}(\mathscr{H})=\{f\mid 
				& f, f^{\prime}, f^{\prime \prime} \in L^2\left(\Gamma_{n_1} \square \Gamma_{n_2}\right),\\ 
				& f~\text{\rm satisfies the condition (2.1) at the vertices of~} \Gamma_{n_1} \square \Gamma_{n_2}\}.
			\end{aligned}
			\nonumber
		\end{equation}
		then $\mathscr{H}$ is unitarily equivalent to the direct sum of the Laplacian $\mathscr{H}_{s,t}$ defined on the space $\mathscr{\mathcal {F}}_{s,t}, s=0,1,\dots,n_1-1, t=0,1,\dots,n_2-1$, i.e. 
		$$
		\mathscr{H} \cong\oplus_{s=0}^{n_1-1}\left( \oplus_{t=0}^{n_2-1}\mathscr{H}_{s,t}\right).
		$$
	\end{theorem}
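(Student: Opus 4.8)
The plan is to promote the Hilbert-space isomorphism $P=\oplus_{s,t}P_{s,t}$ of Theorem 4.1 to a unitary equivalence of operators by showing that $P$ intertwines $\mathscr{H}$ with $\oplus_{s,t}\mathscr{H}_{s,t}$, where $\mathscr{H}_{s,t}$ is the restriction of $\mathscr{H}$ to $\mathcal{F}_{s,t}$. The whole argument rests on the structure-preservation clause of Definition 3.3, namely that the $G_{n_1}\otimes G_{n_2}$ action commutes with $\mathscr{H}$ and preserves the standard conditions (2.1). Concretely, for each group element $g_{\kappa,\iota}=(g_1^\kappa,g_2^\iota)$ I would introduce the pullback operator $U_{\kappa,\iota}$ acting by $(U_{\kappa,\iota}f)|_{e_{i,m}}=f|_{g_{\kappa,\iota}e_{i,m}}$, as in (4.1), and first record that $U_{\kappa,\iota}$ is unitary on $L^2(\Gamma_{n_1}\Box\Gamma_{n_2})$, maps $\mathcal{D}(\mathscr{H})$ bijectively onto itself, and satisfies $\mathscr{H}U_{\kappa,\iota}=U_{\kappa,\iota}\mathscr{H}$ on $\mathcal{D}(\mathscr{H})$. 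These facts follow because the action is an edge-length-preserving, incidence-preserving reparameterization: second differentiation along an edge is unchanged by an isometric shift of the arc-length coordinate, and since $g_{\kappa,\iota}$ carries $E_v$ bijectively onto $E_{g_{\kappa,\iota}v}$, the continuity and current-conservation conditions (2.1) at each vertex are carried to the corresponding conditions at the image vertex.

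With this in hand the algebra is short. By its definition (4.1), each $P_{s,t}=\tfrac{1}{n_1n_2}\sum_{\kappa,\iota}\tau_{s,t}(g_{\kappa,\iota})\,U_{\kappa,\iota}$ is a finite linear combination of the $U_{\kappa,\iota}$ with scalar coefficients. Since every $U_{\kappa,\iota}$ preserves $\mathcal{D}(\mathscr{H})$ and commutes with $\mathscr{H}$, so does $P_{s,t}$: it maps $\mathcal{D}(\mathscr{H})$ into itself and $\mathscr{H}P_{s,t}f=P_{s,t}\mathscr{H}f$ for all $f\in\mathcal{D}(\mathscr{H})$. In particular $f_{s,t}=P_{s,t}f$ again satisfies (2.1), because (2.1) is a system of linear homogeneous conditions stable under each $U_{\kappa,\iota}$ and hence under their linear combination. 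Thus each $\mathcal{F}_{s,t}$ is a reducing subspace for $\mathscr{H}$, and $\mathscr{H}_{s,t}:=\mathscr{H}|_{\mathcal{F}_{s,t}}$ with domain $\mathcal{D}(\mathscr{H})\cap\mathcal{F}_{s,t}$ is a well-defined self-adjoint operator.

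It then remains to match the domains and assemble the block form. Using the reconstruction identity (4.3), $f=\sum_{s,t}f_{s,t}$ with orthogonal summands $f_{s,t}\in\mathcal{F}_{s,t}$, I would show that $f\in\mathcal{D}(\mathscr{H})$ if and only if each $f_{s,t}\in\mathcal{D}(\mathscr{H}_{s,t})$: the forward implication is the previous step, while the converse holds because a finite sum of elements of $\mathcal{D}(\mathscr{H})$ again lies in $\mathcal{D}(\mathscr{H})$. Combined with $\mathscr{H}f=\sum_{s,t}\mathscr{H}f_{s,t}=\sum_{s,t}\mathscr{H}_{s,t}f_{s,t}$ and the isometry (4.4), this yields $P\mathscr{H}P^{-1}=\oplus_{s,t}\mathscr{H}_{s,t}$, which is exactly the asserted unitary equivalence.

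I expect the main obstacle to be the unbounded-operator bookkeeping rather than the representation-theoretic algebra: one must argue rigorously that the bounded projections $P_{s,t}$ commute with the self-adjoint $\mathscr{H}$ in the strong sense $P_{s,t}\,\mathcal{D}(\mathscr{H})\subseteq\mathcal{D}(\mathscr{H})$, since only then does $\mathcal{F}_{s,t}$ genuinely reduce $\mathscr{H}$ and the direct sum of restrictions make sense. This is precisely where the hypothesis that the action commutes with $\mathscr{H}$ and fixes the vertex conditions must be invoked orbit by orbit; a careful statement also clarifies that the restricted operators $\mathscr{H}_{s,t}$ inherit the phase-twisted (Floquet-type) self-adjoint vertex conditions on the fundamental domain $W$ induced by the representation $\tau_{s,t}$, which is the structure exploited in the quotient-graph analysis of the next section.
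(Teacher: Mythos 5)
Your proposal is correct and follows essentially the same route as the paper: both reduce the operator decomposition to the $L^2$-decomposition of Theorem 4.1 together with the fact that the averaging maps $P_{s,t}$ commute with $-d^2/dx^2$ edgewise and preserve the vertex conditions --- the standard condition (2.1) at the original vertices and the $\tau_{s,t}$-twisted conditions at the dummy vertices. If anything, your explicit introduction of the unitaries $U_{\kappa,\iota}$ and your insistence on verifying $P_{s,t}\,\mathcal{D}(\mathscr{H})\subseteq\mathcal{D}(\mathscr{H})$ is more careful than the paper, which simply asserts that the domain decomposition \enquote{follows from Theorem 4.1} before carrying out the same vertex-condition check by the averaging computation.
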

	\begin{remark}
	\end{remark}
	\begin{itemize}
		\item The domain of $\mathscr{H}_{s,t}$ is 
		\begin{equation}
			\begin{aligned}
				\mathcal{D}(\mathscr{H}_{s,t})=\{f_{s,t} \mid 
				& f_{s,t}, f_{s,t}^{\prime}, f_{s,t}^{\prime \prime} \in \mathcal{F}_{s,t},\\ 
				& f_{s,t}~\text{satisfies the condition (2.1) at the vertices of}~\Gamma_{n_1}\Box\Gamma_{n_2},\\
				& f_{s,t}~\text{satisfies} 
				\left\{
				\begin{aligned}			
					&f_{s,t}|_{e_{i,m}}\left(\tilde{v}\right)=\tau_{s,t}\left(g_{\kappa,\iota}\right)f_{s,t}|_{e_{i,m}}\left(\tilde{v}\right),\\
					&f_{s,t}^\prime|_{e_{i,m}}\left(\tilde{v}\right)+\tau_{s,t}\left(g_{\kappa,\iota}\right)f_{s,t}^\prime|_{e_{i,m}}\left(\tilde{v}\right)=0,
				\end{aligned}
				\right.
				\\
				&\text{at the dummy vertex}~\tilde{v}~\text{connecting edges}~ e_{i,m}~\text{and}~e_{\bar{i},m}=g_{\kappa,\iota}e_{i,m}
				\}.
			\end{aligned}
		\end{equation}
		\item The following spectral relation can be obtained from the decomposition of the operators, 
		$$
		\sigma(\mathscr{H})=\bigcup^{n_1-1}_{s=0}\left(\bigcup^{n_2-1}_{t=0}\sigma(\mathscr{H}_{s,t})\right).
		$$
	\end{itemize}
	\begin{proof}
		\par Since 
		$$
		\mathcal{D}(\mathscr{H})\subset L^2(\Gamma_{n_1}\Box \Gamma_{n_2}),
		$$
		then a direct sum decomposition of the domain of the operator follows from Theorem 4.1, i.e. 
		$$
		\mathcal{D}(\mathscr{H})\cong\oplus_{s=0}^{n_1-1}\left(\oplus_{t=0}^{n_2-1}\mathcal{D}(\mathscr{H}_{s,t})\right).
		$$
		For any $f_{s,t}|_{e_{i,m}}\in \mathcal{F}_{s,t}$,
		$$
		\mathscr{H}_{s,t}f_{s,t}|_{e_{i,m}}=-f^{\prime\prime}_{s,t}|_{e_{i,m}}\in\mathcal{F}_{s,t},
		$$
		therefore
		$$
		\mathscr{H}_{s,t}(\mathcal{F}_{s,t})\subset \mathcal{F}_{s,t}.
		$$
		To sum up, we can get 
		$$
		\mathscr{H}\cong\oplus_{s=0}^{n_1-1}\left(\oplus_{t=0}^{n_2-1}\mathscr{H}_{s,t}\right).
		$$
		In the following, it is proved that $f_{s,t}$ in $\mathcal{D}(\mathscr{H}_t)$ satisfies the condition(4.6). Let $v_i$ is the original vertex of $\Gamma_{n_1}\Box\Gamma_{n_2}$, doniting 
		$$
		\langle f_{s,t}|_{e_{i,m}}\rangle_{\kappa,\iota}=\frac{\tau_{s,t}(g_{\kappa,\iota})f|_{g_{\kappa,\iota}e_{i,m}}}{n_1n_2},
		$$ 
		$f$ satisfies the standard condition (2.1) at $v_i$ , therefore 
		\begin{equation}
			\left\{
			\begin{aligned}
				&\langle f_{s,t}|_{e_{i,1}}\rangle_{\kappa,\iota}(v_i)=\langle f_{s,t}|_{e_{i,2}}\rangle_{\kappa,\iota}(v_i)=\cdots=\langle f_{s,t}|_{e_{i,d_{v_i}}}\rangle_{\kappa,\iota}(v_i),\\
				&\langle f^\prime_{s,t}|_{e_{i,1}}\rangle_{\kappa,\iota}(v_i)+\langle f^\prime_{s,t}|_{e_{i,2}}\rangle_{\kappa,\iota}(v_i)+\cdots+\langle f^\prime_{s,t}|_{e_{i,d_{v_i}}}\rangle_{\kappa,\iota}(v_i)=0,\\
			\end{aligned}
			\right.
		\end{equation}
		so we get 
		\begin{equation}
			\left\{
			\begin{aligned}
				&f_{s,t}|_{e_{i,1}}(v_i)=f_{s,t}|_{e_{i,2}}(v_i)=\cdots=f_{s,t}|_{e_{i,d_{v_i}}}(v_i),\\
				&f^\prime_{s,t}|_{e_{i,1}}(v_i)+f^\prime_{s,t}|_{e_{i,2}}(v_i)+\cdots+f^\prime_{s,t}|_{e_{i,d_{v_i}}}(v_i)=0,\\
			\end{aligned}
			\right.
		\end{equation}
		thus, $f_{s,t}$ satisfies the standard condition (2.1) at any of the original vertices of $\Gamma_{n_1}\Box\Gamma_{n_2}$. Since $e_{\bar{i},m}=g_{\kappa,\iota}e_{i,m}$, at the dummy vertex $\tilde{v}$ connecting edges $e_{i,m}$ and $e_{\bar{i},m}$, $f_{s,t}$ is also satisfied 
		\begin{equation}
			\left\{
			\begin{aligned}
				&f_{s,t}|_{e_{i,m}}(\tilde{v})=\tau_{s,t}(g_{\kappa,\iota})f_{s,t}|_{e_{\bar{i},m}}(\tilde{v}),\\
				&f^\prime_{s,t}|_{e_{i,m}}(\tilde{v})+\tau_{s,t}(g_{\kappa,\iota})f^\prime_{s,t}|_{e_{\bar{i},m}}(\tilde{v})=0.
			\end{aligned}
			\right.
		\end{equation} 
		\par This concludes the proof.
	\end{proof}
	\section{Secular determinant and its decomposition}
	In this section, we decompose the secular determinant of the Cartesian product graph $\Gamma_{n_1}\Box\Gamma_{n_2}$ by constructing its quotient graph.
	\subsection{The construction of quotient graph}
	\par It is known that if a graph $\Gamma$ has a group $G$ action on it, and $\rho_s,s=0,1,2,\dots,n$ are all the irreducible representations of $G$, then the spectrum of the Laplacian $\mathscr{H}$ on $\Gamma$ satisfies
	$$
	\sigma(\mathscr{H})=\bigcup^{n}_{s=1}\sigma(\mathscr{H}_s).
	$$
	If $\rho_n$ is the $D$-dimensional irreducible representation of $G$, then the quotient graph $G/\rho_n$ of $\Gamma$ is obtained by gluing $D$ basic domains at their boundary points. The new vertex conditions formed during this gluing process depend on $\rho_n$. Moreover, the spectrum $\sigma$ of the quotient graph $G/\rho_n$ is isomorphic to $\sigma(\mathscr{H}_n)$. Since all irreducible representations of the cyclic group are one-dimensional, the quotient graph of $\Gamma_{n_1}\Box\Gamma_{n_2}$ consists of a single fundamental domain.
	\par Consider the Cartesian product graph $\Gamma_{n_1}\Box\Gamma_{n_2}$ with standard condition. It is evident that the group $G_{n_1}\otimes G_{n_2}$ is a cyclic groups\upcite{ref14}, and according to Theorem 4.1, it admits $n_1n_2$ quotient graphs. The vertex conditions for the quotient graph $\Gamma_{n_1}\Box\Gamma_{n_2}/\tau_{s,t}$ are shown as follows.
	\par Introducing virtual vertices at the center of edges in $\Gamma_{n_1}\Box\Gamma_{n_2}$. Let the fundamental domain of $\Gamma_{n_1}\Box\Gamma_{n_2}$ is a graph which vertices are $v_1,v_a,v_b,v_c,v_d$(Figure 5.1(a)). After the action of the group $G_{n_1}\otimes G_{n_2}=\left\{(g_1^\kappa,g_2^\iota)\right\},\kappa=1,\dots,n_1,\iota=1,\dots,n_2$, the vertices $v_a$ and $v_d$, $v_b$ and $v_c$ are glued together to form new vertices $\widetilde{v_d}$ and $\widetilde{v_c}$ respectively, thus creating the quotient graph of $\Gamma_{n_1}\Box\Gamma_{n_2}$ (some vertices are omitted and edge lengths are neglected in Figure 5.1 for clarity).
	
	\begin{figure}[H]
		\centering 
		\includegraphics[width=10cm]{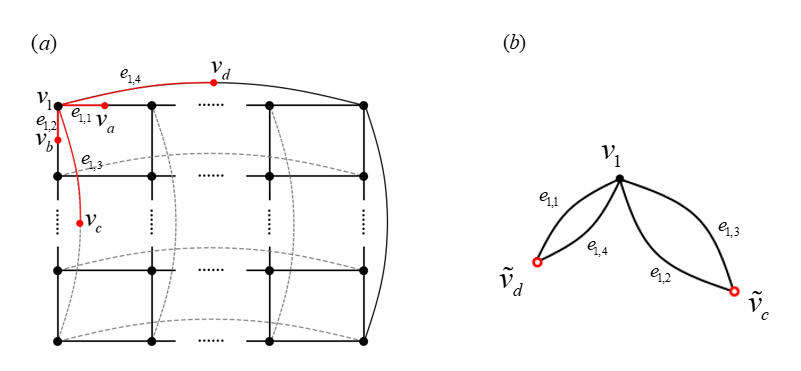}\\
		\caption{(a)The fundamental domain of $\Gamma_{n_1}\Box\Gamma_{n_2}$.(b)The quotient graph of $\Gamma_{n_1}\Box\Gamma_{n_2}$.}
		\label{fig:fig5}
	\end{figure}
	
	\par After the group $G_{n_1}\otimes G_{n_2}$ actions, we have:
	\begin{equation}
		\left\{
		\begin{aligned}
			&(g_1^{n_1},g_2)f\mid_{e_{1,1}}(v_a):=\tau_{s,t}(g_1^{n_1},g_2)f\mid_{e_{1,1}}(v_a)=f\mid_{(g_1^{n_1},g_2)e_{1,1}}(v_d),\\
			&(g_1^{n_1},g_2)f^\prime\mid_{e_{1,1}}(v_a):=\tau_{s,t}(g_1^{n_1},g_2)f^\prime\mid_{e_{1,1}}(v_a)=f^\prime\mid_{(g_1^{n_1},g_2)e_{1,1}}(v_d),
		\end{aligned} 
		\right. 
	\end{equation}
	\begin{equation}
		\left\{
		\begin{aligned}
			&(g_1,g_2^{n_2})f\mid_{e_{1,2}}(v_b):=\tau_{s,t}(g_1,g_2^{n_2})f\mid_{e_{1,2}}(v_b)=f\mid_{(g_1,g_2^{n_2})e_{1,2}}(v_c),\\
			&(g_1,g_2^{n_2})f^\prime\mid_{e_{1,2}}(v_b):=\tau_{s,t}(g_1,g_2^{n_2})f^\prime\mid_{e_{1,2}}(v_b)=f^\prime\mid_{(g_1,g_2^{n_2})e_{1,2}}(v_c),
		\end{aligned} 
		\right. 
	\end{equation}
	The standard condition satisfied at the vertices $\widetilde{v_d}$ and $\widetilde{v_c}$ are as follows: 
	\begin{equation}
		\left\{
		\begin{aligned}
			&f\mid_{e_{1,4}}(v_d)=f\mid_{(g_1^{n_1},g_2)e_{1,1}}(v_d),\\
			&f^\prime\mid_{e_{1,4}}(v_d)+f^\prime\mid_{(g_1^{n_1},g_2)e_{1,1}}(v_d)=0,
		\end{aligned}
		\right.
		\left\{
		\begin{aligned}
			&f\mid_{e_{1,3}}(v_c)=f\mid_{(g_1,g_2^{n_2})e_{1,2}}(v_c),\\
			&f^\prime\mid_{e_{1,3}}(v_c)+f^\prime\mid_{(g_1,g_2^{n_2})e_{1,2}}(v_c)=0,
		\end{aligned}
		\right.
	\end{equation}
	Therefore, from equations (5.1) to (5.3), we have:
	\begin{equation}
		\left\{
		\begin{aligned}
			&\tau_{s,t}(g_1^{n_1},g_2)f\mid_{e_{1,1}}(v_a)-f\mid_{e_{1,4}}(v_d)=0,\\
			&\tau_{s,t}(g_1^{n_1},g_2)f^\prime\mid_{e_{1,1}}(v_a)+f^\prime\mid_{e_{1,4}}(v_d)=0,
		\end{aligned}
		\right.
		\left\{
		\begin{aligned}
			&\tau_{s,t}(g_1,g_2^{n_2})f\mid_{e_{1,2}}(v_b)-f\mid_{e_{1,3}}(v_c)=0,\\
			&\tau_{s,t}(g_1,g_2^{n_2})f^\prime\mid_{e_{1,2}}(v_b)+f^\prime\mid_{e_{1,3}}(v_c)=0.
		\end{aligned}
		\right.
	\end{equation}
	Then
	\begin{equation}
		A_{\tilde{v}_d}=
		\begin{pmatrix}
			\tau_{s,t}(g_1^{n_1},g_2) & -1\\
			0           & 0 
		\end{pmatrix}
		,
		B_{\tilde{v}_d}=
		\begin{pmatrix}
			0           & 0\\
			\tau_{s,t}(g_1^{n_1},g_2) & 1 
		\end{pmatrix}
		,
	\end{equation}
	\begin{equation}
		A_{\tilde{v}_c}=
		\begin{pmatrix}
			\tau_{s,t}(g_1,g_2^{n_2}) & -1\\
			0           & 0 
		\end{pmatrix}
		,
		B_{\tilde{v}_c}=
		\begin{pmatrix}
			0           & 0\\
			\tau_{s,t}(g_1,g_2^{n_2}) & 1 
		\end{pmatrix}
		,
	\end{equation}
	\begin{equation}
		A_{v_1}=
		\begin{pmatrix}
			1 &-1 &0 &0\\
			0 &1  &-1 & 0\\
			0 &0  &1  &-1\\
			0 &0  &0  &0 
		\end{pmatrix}
		,
		B_{v_1}=
		\begin{pmatrix}
			0 &0  &0  &0\\
			0 &0  &0  &0\\
			0 &0  &0  &0\\
			1 &1  &1  &1 
		\end{pmatrix}
		.
	\end{equation}
	\par The function $f_{s,t}$ restricted to the quotient graph $\Gamma_{n_1}\Box\Gamma_{n_2}/\tau_{s,t}$ obviously still satisfies the corresponding vertex conditions. For convenience, we consider $v_i$ as the starting point for $i=1,2,\ldots,n_1n_2$. The edges $e_{i,1}$ and $e_{i,4}$ from $v_i$ to the virtual vertices are regarded as the interval $[0,L_1]$, and the edges $ e_{i,2}$ and $e_{i,3}$ from $v_i$ to the virtual vertices are regarded as the interval $[0,L_3]$. Acording to the proof of Theorem 4.1, there is a mapping $P_{s,t}$ from $f\in L^2(\Gamma_{n_1}\Box\Gamma_{n_2})$ to $f_{s,t}\in \mathcal{F}_{s,t}$, defined by Equation (4.1). Given that $f$ satisfies the standard conditions at the vertices, it can be shown that $f_{s,t}$ restricted to the quotient graph $\Gamma_{n_1}\Box\Gamma_{n_2}/\tau_{s,t}$ maintains the corresponding vertex conditions. 
	\paragraph{For the case $gcd(n_1,n_2)=1$.} We just need consider the circulant graph $C_{n_1n_2}(n_1,n_2)$ with standard condition now. It is evident that the circulant graph $C_{n_1n_2}(n_1,n_2)$ is a central symmetric graph with group $G_{n_1n_2}$ action. According to the decomposition of the function space Theorem 3.2 in \cite{ref17}, it has $n_1n_2$ quotient graphs. Similarly, we give the vertex conditions for quotient graph $C_{n_1n_2}(n_1,n_2)/\rho_\epsilon,\epsilon=0,1,2,\dots,n_1n_1-1$ (some vertices are omitted and edge lengths are neglected in Figure 5.2 for clarity).
	\begin{figure}[H]
		\centering 
		\includegraphics[width=12cm]{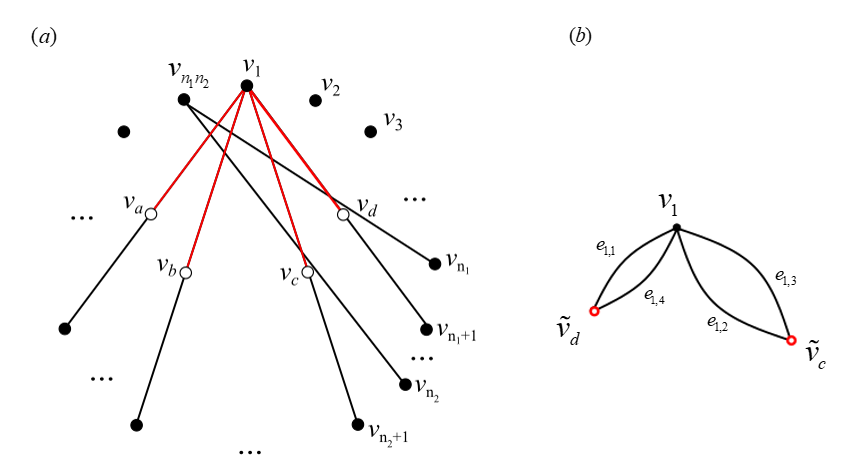}\\
		\caption{(a)Circulant graph $C_{n_1n_2}(n_1,n_2)$ and its domain $W$.(b)The quotient graph of $C_{n_1n_2}(n_1,n_2)$.}
		\label{fig:fig6}
	\end{figure}
	\par After the action of group $G_{n_1n_2}$ and with the standard condition at vertices $\widetilde{v_d}$ and $\widetilde{v_c}$, we have:
	\begin{equation}
		\left\{
		\begin{aligned}
			&\rho_\epsilon(g^{n_1})f\mid_{e_{1,1}}(v_a)-f\mid_{e_{1,4}}(v_d)=0,\\
			&\rho_\epsilon(g^{n_1})f^\prime\mid_{e_{1,1}}(v_a)+f^\prime\mid_{e_{1,4}}(v_d)=0,
		\end{aligned}
		\right.
		\left\{
		\begin{aligned}
			&\rho_\epsilon(g^{n_2})f\mid_{e_{1,2}}(v_b)-f\mid_{e_{1,3}}(v_c)=0,\\
			&\rho_\epsilon(g^{n_2})f^\prime\mid_{e_{1,2}}(v_b)+f^\prime\mid_{e_{1,3}}(v_c)=0.
		\end{aligned}
		\right.
	\end{equation}
	Then
	\begin{equation}
		A_{\tilde{v}_d}=
		\begin{pmatrix}
			\rho_\epsilon(g^{n_1}) & -1\\
			0           & 0 
		\end{pmatrix}
		,
		B_{\tilde{v}_d}=
		\begin{pmatrix}
			0           & 0\\
			\rho_\epsilon(g^{n_1}) & 1 
		\end{pmatrix}
		,
	\end{equation}
	\begin{equation}
		A_{\tilde{v}_c}=
		\begin{pmatrix}
			\rho_\epsilon(g^{n_2}) & -1\\
			0           & 0 
		\end{pmatrix}
		,
		B_{\tilde{v}_c}=
		\begin{pmatrix}
			0           & 0\\
			\rho_\epsilon(g^{n_2}) & 1 
		\end{pmatrix}
		,
	\end{equation}
	\begin{equation}
		A_{v_1}=
		\begin{pmatrix}
			1 &-1 &0 &0\\
			0 &1  &-1 & 0\\
			0 &0  &1  &-1\\
			0 &0  &0  &0 
		\end{pmatrix}
		,
		B_{v_1}=
		\begin{pmatrix}
			0 &0  &0  &0\\
			0 &0  &0  &0\\
			0 &0  &0  &0\\
			1 &1  &1  &1 
		\end{pmatrix}
		.
	\end{equation}
	\par The function $f_\epsilon$ restricted to the quotient graph $ C_{n_1n_2}(n_1,n_2)/\rho_\epsilon$ still satisfies the corresponding vertex conditions. For convenience in calculation, consider $v_i$ as the starting point for $i = 1,2,\dots,n_1n_2$. The edges $e_{i,1}$ and $e_{i,4}$ from  $v_i$ to the virtual vertices are regarded as the interval $[0,L_1]$, and the edges $e_{i,2}$ and $e_{i,3}$ from $v_i$ to the virtual vertices are regarded as the interval $[0,L_3]$. Similarly, according to the proof of Theorem 3.2 in \cite{ref17}, we have the mapping $P_\epsilon$ from $f\in L^2(C_{n_1n_2}(n_1,n_2))$to $f_\epsilon\in \mathcal{F}_\epsilon$ for $\epsilon = 0,1,\dots,n_1n_2-1$:
	$$
	P_\epsilon f|_{e_{i,m}} = f_\epsilon|_{e_{j,m}} = \frac{\rho_\epsilon(g^{n_1n_2})f|_{e_{i,m}} + \rho_\epsilon(g)f|_{ge_{i,m}} + \cdots + \rho_\epsilon(g^{n_1n_2-1})f|_{g^{n_1n_2-1}e_{i,m}}}{n_1n_2}.
	$$
	\par Given that $f$ satisfies the standard conditions at the vertices, it can be shown that $f_\epsilon$ restricted to the quotient graph $C_{n_1n_2}(n_1,n_2)/\rho_\epsilon$ still satisfies the corresponding vertex conditions.
	\begin{example}
		\rm Considering the quotient graph of~$C_{12}(3,4)$~with $G_{12}$ action. It has totally of 12 quotient graphs and for $C_{n_1n_2}(n_1,n_2)/\rho_\epsilon$ the vertex condition is:
		\begin{equation}
			A_{\tilde{v}_d}=
			\begin{pmatrix}
				\rho_\epsilon(g^3) & -1\\
				0           & 0 
			\end{pmatrix}
			,
			B_{\tilde{v}_d}=
			\begin{pmatrix}
				0           & 0\\
				\rho_\epsilon(g^3) & 1 
			\end{pmatrix}
			,
		\end{equation}
		\begin{equation}
			A_{\tilde{v}_c}=
			\begin{pmatrix}
				\rho_\epsilon(g^4) & -1\\
				0           & 0 
			\end{pmatrix}
			,
			B_{\tilde{v}_d}=
			\begin{pmatrix}
				0           & 0\\
				\rho_\epsilon(g^4) & 1 
			\end{pmatrix}
			,
		\end{equation}
		\begin{equation}
			A_{v_1}=
			\begin{pmatrix}
				1 &-1 &0 &0\\
				0 &1  &-1 & 0\\
				0 &0  &1  &-1\\
				0 &0  &0  &0 
			\end{pmatrix}
			,
			B_{v_1}=
			\begin{pmatrix}
				0 &0  &0  &0\\
				0 &0  &0  &0\\
				0 &0  &0  &0\\
				1 &1  &1  &1 
			\end{pmatrix}
			,
		\end{equation}
		and $f_\epsilon$ restricts to the quotient graph $G_{12}/\rho_\epsilon$ satisfying the corresponding vertex condition. 
	\end{example}
	\subsection{Decomposition of the secular determinant}
	\par The secular determinant is a function of the eigenvalues of differential operators on quantum graphs. In this chapter, the secular determinant of $L^2(\Gamma_{n_1}\Box\Gamma_{n_2})$ is obtained by the calculation method in reference \cite{ref13}. Additionally, the decomposition of the secular determinant is derived by applying the decomposition theorem of function space from Theorem 4.1. All factors of secular determinant of $\Gamma_{n_1}\Box\Gamma_{n_2}$ correspond to the secular determinants of all its quotient graphs. This allows the study of the eigenvalues of the original quantum graph to be transformed into the study of the eigenvalues of its quotient graphs. We just present a lemma concerning the scattering matrix:
	\begin{lemmar}\rm\upcite{ref13}
		\rm $\lambda=k^2\neq0$ is an eigenvalue of the Laplace operator on metric graphs if and only if $k$ is a root of the following equation,
		\begin{equation}
			\varSigma(k):=\det(I-SD(k))=0,
		\end{equation}
		and $S$ is bond scattering matrix, $D(k)$ is a diagonal matrix related to edge lengths, this equation is known as the \textit{secular equation}.
	\end{lemmar}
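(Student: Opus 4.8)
The plan is to reduce the eigenvalue problem for $\mathscr{H} = -d^2/dx^2$ to a finite-dimensional linear-algebra condition on wave amplitudes, following the bond-scattering formalism of \cite{ref13}. Since $k \neq 0$, on every edge $e_j \cong [0, L_{e_j}]$ the general solution of $-f'' = k^2 f$ is a linear combination of the two plane waves $e^{\pm i k x}$. First I would orient each edge in both directions, producing $2|E|$ directed bonds, and on each bond $b$ record the amplitude $\alpha_b$ of the plane wave travelling outward from the tail vertex of $b$. Because $e^{ikx}$ and $e^{-ikx}$ are linearly independent on any interval of positive length when $k \neq 0$, this assignment $f \mapsto (\alpha_b)_b$ is a bijection between edgewise solutions and amplitude vectors in $\mathbb{C}^{2|E|}$; in particular a nonzero amplitude vector corresponds to a nonzero function.

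Next I would encode the vertex conditions. At a vertex $v$, rewriting the local condition $A_v \mathbf{f}(v) + B_v \mathbf{f}'(v) = 0$ in terms of the amplitudes of the incident bonds relates the incoming amplitudes (waves arriving at $v$) to the outgoing amplitudes (waves leaving $v$) through the vertex scattering matrix $\sigma^{(v)}(k) = -(A_v + i k B_v)^{-1}(A_v - i k B_v)$. The key sub-step is to verify that $A_v + i k B_v$ is invertible for $k \neq 0$ under the standard conditions (2.1), which follows from the self-adjointness of the vertex conditions and makes $\sigma^{(v)}(k)$ well defined (indeed unitary). Collecting the $\sigma^{(v)}(k)$ over all vertices into a single block matrix yields the bond scattering matrix $S$, whose entry $S_{b',b}$ is nonzero only when $b$ arrives at, and $b'$ departs from, a common vertex.

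I would then introduce the diagonal propagation matrix $D(k)$ with $D(k)_{bb} = e^{i k L_b}$, recording the phase accumulated as a wave traverses bond $b$. The global compatibility of the amplitudes is precisely the self-consistency requirement that propagating along each bond and then scattering at its head vertex reproduces the original amplitudes, i.e. $\alpha = S D(k) \alpha$. Such a nonzero $\alpha$ exists if and only if $\det\big(I - S D(k)\big) = 0$, which is the asserted secular equation $\varSigma(k) = 0$.

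Finally I would close the equivalence in both directions: an eigenfunction of $\mathscr{H}$ with eigenvalue $k^2$ yields, via the bijection of the first step, an amplitude vector solving $\alpha = S D(k)\alpha$; conversely a nonzero solution $\alpha$ reconstructs a function solving $-f'' = k^2 f$ on each edge that, by the construction of $S$, satisfies every vertex condition and hence lies in $\mathcal{D}(\mathscr{H})$ as a genuine eigenfunction. The main obstacle I expect is the bookkeeping in the second step: correctly translating the matrices $A_v, B_v$ into the incoming/outgoing amplitude basis and confirming invertibility of $A_v + i k B_v$, so that $\sigma^{(v)}(k)$ and therefore $S$ are genuinely well defined for all $k \neq 0$.
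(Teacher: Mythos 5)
Your proposal is correct and follows essentially the same route as the paper: the lemma is quoted from \cite{ref13} without an independent proof, and the bond-scattering derivation you outline (plane-wave ansatz on each edge for $k\neq 0$, vertex scattering matrices $\sigma^{(v)}(k)=-(A_v+ikB_v)^{-1}(A_v-ikB_v)$ assembled into $S$, the diagonal propagation matrix $D(k)$, and the self-consistency condition $\alpha=SD(k)\alpha$) is precisely the formalism the paper then instantiates in Section 5.2 via equations (5.11), (5.12) and the matrices $S_{s,t}$. The one technical point you flag, invertibility of $A_v+ikB_v$ for $k\neq 0$, does hold for the standard conditions (2.1) by their self-adjointness, so there is no gap.
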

	\par In Section 5.1, we obtained the quotient graph of $\Gamma_{n_1}\Box\Gamma_{n_2}$ and its vertex conditions (5.5) to (5.7). In this section, we will provide the decomposition of the secular equation for the quotient graph with different vertex conditions. We assign directions and values to the four edges of its quotient graph firstly, with $L_1 = L_2$ and $L_3 = L_4$. When $k\neq0$, the solution to the equation $-f''=k^2f$ on each edge can be written as:
	\begin{equation}
		f_j(x)=a_je^{ikx}+a_{\overline{j}}e^{ik(L_j-x)},j=1,2,3,4,
	\end{equation}
	\begin{figure}[H]
		\centering
		\includegraphics[width=40mm]{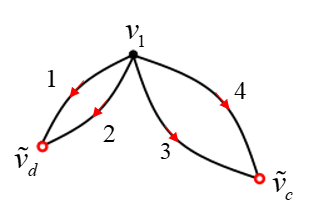}\\
		\caption{The quotient graph of $\Gamma_{n_1}\Box\Gamma_{n_2}$ and the directions of bonds.}
	\end{figure}
	The diagonal matrix $D(k)$ related to the edge lengths is given by:
	\begin{equation}
		D(k)=
		\begin{pmatrix}
			e^{ikL_1} &0 &0 &0 &0 &0 &0 &0\\
			0 &e^{ikL_2} &0 &0 &0 &0 &0 &0\\
			0 &0 &e^{ikL_3} &0 &0 &0 &0 &0\\
			0 &0 &0 &e^{ikL_4} &0 &0 &0 &0\\
			0 &0 &0 &0 &e^{ikL_1} &0 &0 &0\\
			0 &0 &0 &0 &0 &e^{ikL_2} &0 &0\\
			0 &0 &0 &0 &0 &0 &e^{ikL_3} &0\\
			0 &0 &0 &0 &0 &0 &0 &e^{ikL_4}\\
		\end{pmatrix}.
	\end{equation}
	According to (5.5)-(5.7):
	\begin{equation}
		\left\{
		\begin{aligned}
			&\tau_{s,t}(g_1^{n_1},g_2)f_1(L_1)-f_2(L_2)=0,\\
			&-\tau_{s,t}(g_1^{n_1},g_2)f'_1(L_1)-f'_2(L_2)=0,
		\end{aligned}
		\right.
		\left\{
		\begin{aligned}
			&\tau_{s,t}(g_1,g_2^{n_2})f_3(L_3)-f_4(L_4)=0,\\
			&-\tau_{s,t}(g_1,g_2^{n_2})f'_3(L_1)-f'_4(L_4)=0.
		\end{aligned}
		\right.
	\end{equation}
	Substituting (5.11) into (5.13):
	\begin{equation}
		\begin{aligned}
			&a_{\overline{1}}=\tau_{s,t}(g_1^{n_1},g_2)^{-1}a_2e^{ikL_2},
			a_{\overline{2}}=\tau_{s,t}(g_1^{n_1},g_2)a_1e^{ikL_1},\\
			&a_{\overline{3}}=\tau_{s,t}(g_1,g_2^{n_2})^{-1}a_4e^{ikL_4},
			a_{\overline{4}}=\tau_{s,t}(g_1,g_2^{n_2})a_3e^{ikL_3}.
		\end{aligned}
	\end{equation}
	The standard conditions at vertex $v_1$:
	\begin{equation}
		\begin{aligned}
			&f_1(0)=f_2(0)=f_3(0)=f_4(0);\\
			&f'_1(0)+f'_2(0)'+f'_3(0)+f_4(0)=0,
		\end{aligned}
	\end{equation}
	Sunstituting into (5.11)
	\begin{equation}
		a_i=-a_{\overline{i}}e^{ikL_i}+\frac{1}{2}\varSigma^4_{j=1}a_{\overline{j}}e^{ikL_j},j=1,2,3,4.
	\end{equation}
	Then the secular determinant is obtained as:
	\begin{equation}
		\nonumber
		S_{s,t}=
		\begin{pmatrix}
			0 &0 &0 &0 &-1/2 &1/2 &1/2 &1/2\\
			0 &0 &0 &0 &1/2 &-1/2 &1/2 &1/2\\
			0 &0 &0 &0 &1/2 &1/2 &-1/2 &1/2\\
			0 &0 &0 &0 &1/2 &1/2 &1/2 &-1/2\\
			0 &\tau_{s,t}(g_1^{n_1},g_2)^{-1} &0 &0 &0 &0 &0 &0\\
			\tau_{s,t}(g_1^{n_1},g_2) &1 &0 &0 &0 &0 &0 &0\\
			0 &0 &0  &\tau_{s,t}(g_1,g_2^{n_2})^{-1} &0 &0 &0 &0\\
			0 &0 &\tau_{s,t}(g_1,g_2^{n_2}) &0 &0 &0 &0 &0\\
		\end{pmatrix}.
	\end{equation}
	Therefore, by Lemma 5.2, the secular determinant for the quotient graph $\Gamma_{n_1}\Box\Gamma_{n_2}/\tau_{s,t}$ of $\Gamma_{n_1}\Box\Gamma_{n_2}$ is given by:
	\begin{equation}
		\begin{aligned}
			\varSigma_{s,t}(k)=&1-\frac{1}{2}(\tau_{s,t}(g_1^{n_1},g_2)+\tau_{s,t}(g_1^{n_1},g_2)^{-1})e^{2ikL_1}-\frac{1}{2}(\tau_{s,t}(g_1,g_2^{n_2})+\tau_{s,t}(g_1,g_2^{n_2})^{-1})e^{2ikL_3}\\
			&+\frac{1}{2}(\tau_{s,t}(g_1^{n_1},g_2)+\tau_{s,t}(g_1^{n_1},g_2)^{-1})e^{ik(2L_1+4L_3)}+\frac{1}{2}(\tau_{s,t}(g_1,g_2^{n_2})+\tau_{s,t}(g_1,g_2^{n_2})^{-1})e^{ik(4L_1+2L_3)}\\
			&-e^{ik4(L_1+L_3)}.
		\end{aligned}
	\end{equation}
	Thus, the secular determinant of $\Gamma_{n_1}\Box\Gamma_{n_2}$ can be decomposed as:
	\begin{equation}
		\varSigma_{\Gamma_{n_1}\Box\Gamma_{n_2}}(k)=\prod^{n_1-1}_{s=0}\prod^{n_2-1}_{t=0}\varSigma_{s,t}(k).
	\end{equation}
	\par Furthermore, the choice of direction for the edges in quotient graph does not affect the result of the secular determinant. For instance, if we change the direction of edges 2 and 4, the diagonal matrix $D(k)$ remains unchanged, while the scattering matrix for the $t$-th quotient graph becomes:
	\begin{equation}
		\nonumber
		S'_{s,t}=
		\begin{pmatrix}
			0 &1/2 &0 &1/2 &-1/2 &0 &1/2 &0\\
			\tau_{s,t}(g_1^{n_1},g_2) &0 &0 &0 &0 &0 &0 &0\\
			0 &1/2 &0 &1/2 &1/2 &0 &-1/2 &0\\
			0 &0 &\tau_{s,t}(g_1,g_2^{n_2}) &0 &0 &0 &0 &0\\
			0 &0 &0 &0 &0 &\tau_{s,t}(g_1^{n_1},g_2)^{-1} &0 &0\\
			0 &-1/2 &0 &1/2 &1/2 &0 &1/2 &0\\
			0 &0 &0  &0 &0 &0 &0 &\tau_{s,t}(g_1,g_2^{n_2})\\
			0 &1/2 &0 &-1/2 &1/2 &0 &1/2 &0\\
		\end{pmatrix}.
	\end{equation}
	After calculation, the secular determinant \(\varSigma'_{s,t}(k) = \varSigma_{s,t}(k)\). Therefore, the choice of direction for the bonds does not affect the result of the secular equation.
	\par For the case $gcd(n_1,n_2)=1$, expressing $\tau_{s,t}(g_1^{n_1},g_2),\tau_{s,t}(g_1,g_2^{n_2})$ in terms of $\rho_{\epsilon}(g^{n_1}), \rho_{\epsilon}(g^{n_2}), \epsilon=(sn_2+tn_1)\mod (n_1n_2)$ from Remark 4.2.
	\begin{example}
		\rm The secular determinant of the qutient graph $G_{12}/\rho_\epsilon$ of $C_{12}(3,4)$ is 
		$$
		\begin{aligned}
			\varSigma_\epsilon(k)=&1-\frac{1}{2}(\rho_\epsilon(a^{3})+\rho_\epsilon(a^{3})^{-1})e^{2ikL_1}-\frac{1}{2}(\rho_\epsilon(a^{4})+\rho_l(a^{4})^{-1})e^{2ikL_3}\\
			&+\frac{1}{2}(\rho_\epsilon(a^{3})+\rho_\epsilon(a^{3})^{-1})e^{ik(2L_1+4L_3)}+\frac{1}{2}(\rho_\epsilon(a^{4})+\rho_\epsilon(a^{4})^{-1})e^{ik(4L_1+2L_3)}\\
			&-e^{ik4(L_1+L_3)}.
		\end{aligned}
		$$
		Therefore, the secular determinant of $\Gamma_{3}\Box\Gamma_{4}$ can be decomposed into:
		$$
		\varSigma_{\Gamma_{3}\Box\Gamma_{4}}=\varSigma_{C_{12}(3,4)}(k)=\prod^{11}_{\epsilon=0}\varSigma_\epsilon(k).
		$$
	\end{example}
	\newpage

\end{document}